\newcommand{\ket}[1]{\ensuremath{\vert#1\rangle}}
\newcommand{\braket}[2]{\ensuremath{\langle #1\vert#2\rangle}}
\newcommand{\ens}[0]{\ensuremath}
\newcommand{\iE}[0]{\ens{\mathrm{i}}}
\newcommand{\Eins}[0]{\ens{\mathbbm{1}}}
\newcommand{\F}[0]{\ens{\mathbb{F}}}
\newcommand{\N}[0]{\ens{\mathbb{N}}}
\newcommand{\C}[0]{\ens{\mathbb{C}}}
\newcommand{\cB}[0]{\ens{\mathcal{B}}}
\newcommand{\cH}[0]{\ens{\mathcal{H}}}
\newcommand{\Mg}[1]{\ens{\left\lbrace #1 \right\rbrace}}
\newcommand{\MgN}[1]{\ens{\Mg{0,\dots,#1}}}
\newcommand{\MgE}[1]{\ens{\Mg{1,\dots,#1}}}
\newcommand{\ZX}{\ensuremath{Z\!X}}
\theoremstyle{plain}
\newtheorem{lem}{Lemma}[section]
\newtheorem{thm}[lem]{Theorem}
\theoremstyle{definition}
\newtheorem{defi}{Definition}[section]
\begin{document}

\title{Structure of the sets of mutually unbiased bases with cyclic
  symmetry}

\author{Ulrich Seyfarth} 
\affiliation{Max-Planck-Institut f\"ur die  Physik des Lichts, 
G\"{u}nther-Scharowsky-Stra{\ss}e 1, Bau 24,
91058 Erlangen, Germany}

\author{Luis L. S\'{a}nchez-Soto} 
\affiliation{Max-Planck-Institut f\"ur die Physik des Lichts, 
 G\"{u}nther-Scharowsky-Stra{\ss}e 1, Bau 24, 
 91058 Erlangen, Germany} 
 \affiliation{Department f\"{u}r Physik, 
   Universit\"{a}t Erlangen-N\"{u}rnberg, Staudtstra{\ss}e 7,
  Bau 2, 91058 Erlangen, Germany} 
  \affiliation{Departamento de \'Optica, Facultad de F\'{\i}sica, 
  Universidad Complutense, 28040~Madrid, Spain}

\author{Gerd Leuchs} 
\affiliation{Max-Planck-Institut f\"ur die Physik des Lichts, 
 G\"{u}nther-Scharowsky-Stra{\ss}e 1, Bau 24, 
 91058 Erlangen, Germany} 
 \affiliation{Department f\"{u}r Physik, 
   Universit\"{a}t Erlangen-N\"{u}rnberg, Staudtstra{\ss}e 7,
  Bau 2, 91058 Erlangen, Germany}

\date{\today}

% \keywords{}

% \pacs{}

\begin{abstract}
  Mutually unbiased bases that can be cyclically generated by a single
  unitary operator are of special interest, since they can be readily
  implemented in practice. We show that, for a system of qubits,
  finding such a generator can be cast as the problem of finding a
  symmetric matrix over the field $\F_{2}$ equipped with an
  irreducible characteristic polynomial of a given Fibonacci
  index. The entanglement structure of the resulting complete sets is
  determined by two additive matrices of the same size.
\end{abstract}

\maketitle
\section{Introduction}
\label{sec:intro}
Complementarity distinguishes the world of quantum phenomena from the
realm of classical physics~\cite{Bohr:1928fk}. At a fundamental level,
mutually unbiased bases (MUBs) provide, perhaps, the most accurate
statement of complementarity. This notion emerged in the seminal work
of Schwinger~\cite{Schwinger:1960a,Schwinger:1960b,Schwinger:1960c}
and has gradually turned into a cornerstone of quantum information
(see Ref.~\onlinecite{Durt:2010hc} for a comprehensive review).  MUBs
have long been known to provide an optimal scheme for quantum
tomography~\cite{Wootters:1986xy,Wootters:1989vn} and are central to
the formulation of the discrete Wigner
function~\cite{Wootters:1987qf,Asplund:2001,Gibbons:2004bh,
  Vourdas:2004nr,Bjork:2008fv}. They have also been used in
cryptographic protocols~\cite{Bechmann:2000ly,Bourennane:2001qd,
  Nikolopoulos:2005yg,Yu:2008wd, Xiong:2012fe,Mafu:2013qc}, in quantum
error correction codes~\cite{Gottesman:1996ve, Calderbank:1996yo,
  Calderbank:1997qf, Griffiths:2005rf,Pawowski:2010if}, and in quantum
game theory, in particular to provide a solution to the mean king
problem~\cite{Aharonov:2001dq,Englert:2001cr,Aravind:2003nx,
Klappenecker:2005lf,Hayashi:2005oq,Paz:2005dz,Durt:2006fu,
Kimura:2006kl,Kalev:2013bx,Revzen:2013ax}.

For a $d$-dimensional quantum system, it has been shown that the
number of MUBs is at most $d+1$~\cite{Ivanovic:1981tg}. Actually, such
a complete set of MUBs exists whenever $d$ is prime or power of
prime~\cite{Calderbank:1997ao}.  Remarkably though, there is no known
answer for any other values of $d$, although there are some attempts
to find a solution to this problem in some simple cases, such as
$d=6$~\cite{Grassl:2004iz,Butterley:2007fe,Brierley:2008ja,
  Brierley:2009mw,Raynal:2011xc,McNulty:2012ez} or when $d$ is a
nonprime integer squared~\cite{Archer:2005bz,Wocjian:2005pw}.  Recent
work suggests that the answer to this question may well be related
with the non-existence of finite projective planes of certain
orders~\cite{Saniga:2004xa,Bengtsson:2004ol,Weigert:2010ev} or with
the problem of mutually orthogonal Latin squares in
combinatorics~\cite{Zauner:2011fv,Wootters:2006vk,
  Paterek:2009zs,Hall:2010ir}. Furthermore, MUBs have interesting
connections to symmetric informationally complete
positive-operator-valued measures~\cite{Appleby:2009vn} and complex
$t$-designs~\cite{Klappenecker:2005qd,Gross:2007cs}.

Many explicit constructions of MUBs in prime power dimensions have
been proposed~\cite{Bandyopadhyay:2002lm,Parthasarathy:2004pf,
Pittenger:2004vs,Durt:2005yw,Klimov:2005qb,Kibler:2006kn}. However,
irrespective of the approach, one has to face an intriguing question:
different complete sets of MUBs exist with distinct entanglement
properties~\cite{Lawrence:2002ij,Lawrence:2004xj,Romero:2005dz,
Garcia:2010hr,Wiesniak:2011kl,Rehacek:2013jt,Spengler:2013pi}.  For
the experimentalist, this information is of utmost importance, because
the complexity of implementing a given set greatly depends on how many
registers need to be entangled. Note carefully that  this entanglement
structure is different from the inequivalence of different sets of
MUBs~\cite{Godsil:2009,Kantor:2012}.
 
In even prime-power dimensions, complete sets of MUBs can be
shaped in a cyclic manner, as the multiples of a single
generating basis~\cite{Chau:2005wj,Gow:2007dz,Kern:2010lc,
  Seyfarth:2011ru,Seyfarth:2012sl}. This procedure rests on the
properties of the so-called Fibonacci
polynomials~\cite{Hoggart:1973mq} and leads directly to quantum
circuits that can be used for a simple practical realization of these
bases.

In this work, we present a method to setup sets of cyclic MUBs with
different entanglement structures. This is accomplished by unveiling
certain structures within the different sets, which are related either
to a field, an additive group or an additive semigroup. The key idea
is a two-step generalization that softens the properties of the
starting field structure, while preserving the proper features
inherited from the Fibonacci polynomials.

This paper is organized as follows: In Section~\ref{sec:prelim} we
introduce the basic tools and definitions needed for the rest of our
expos\'e. The main results are covered in Section~\ref{sec:method},
which starts with complete sets possessing a field structure and
discusses their extension to sets with a group and a semigroup
structure. The generality of the method is confirmed in
Section~\ref{sec:cmp}, whereas our conclusions are summarized in
Section~\ref{sec:conclusion}.

\section{Preliminaries}
\label{sec:prelim}

\subsection{Mutually unbiased bases}

\begin{defi}[Mutually unbiased bases]
  \label{defi:prelim:mubs} \hfill \\
  Let $\cH= \C^d$ be a $d$-dimensional complex Hilbert space. A pair
  of orthonormal bases $\mathcal{B}_{j} = \{ | \psi_{\ell}^{(j)}
  \rangle \}$ and $\mathcal{B}_{j^{\prime}} = \{ |
  \psi_{\ell^{\prime}}^{(j^{\prime})} \rangle \}$ (with $j \neq
  j^{\prime})$ is said \emph{mutually unbiased} if
  \begin{align}
    | \braket{\psi_{\ell}^{(j)}}{\psi_{\ell^{\prime}}^{(j^{\prime})}}
    |^{2} = \frac{1}{d}
  \end{align}
  for all $\ell , \ell^{\prime} \in \MgE{d}$.
\end{defi}

In physical terms, this means that if the system is prepared in a
state of the first basis, then all outcomes are equally probable when
we conduct a measurement that probes the states of the second basis.
Familiar examples are the spin states of a spin-1/2 particle for two
perpendicular directions or any basis and its Fourier transform for
any dimension $d$.

When the bases in the orthonormal set $\mathfrak{S} = \Mg{ \cB_{0},
  \ldots, \cB_{r}}$, with $r \in \N_{0}$, are pairwise unbiased, we say that
$\mathfrak{S} $ is a set of MUBs. When such a set contains the maximal
number of elements, $d+1$, it is called a \emph{complete} set of MUBs.

As heralded in the introduction, for even prime-power dimensions,
complete sets of MUBs can be constructed in a cyclic way. To
understand this point, we observe that any basis in $\C^{d}$ can be
always identified with a unitary matrix $U \in M_{d} (\C)$ ($M_{d}
(\C)$ stands for the $d \times d$ matrices over $\C$) acting on this
space, since the columns of $U$ define an orthonormal basis and vice
versa. Since $U^{2}, U^{3}, \ldots$ are unitary, they also define
bases.

\begin{defi}[Cyclic mutually unbiased bases]
  \label{defi:prelim:cmubs} \hfill \\
  A complete set $\mathfrak{S} = \Mg{\cB_{0}, \ldots , \cB_{d}}$ of
  MUBs is called \emph{cyclic}, if there exists a unitary matrix $U
  \in M_{d} (\C)$, such that the columns of $U, U^{2}, U^{3}, \ldots,
  U^{d+1} = \openone_{d}$ coincide with the bases in $ \mathfrak{S}$.
\end{defi}

To classify different sets of MUBs, the notion of \emph{equivalence}
has to be established. 

\begin{defi}[Equivalence of complete set of mutually unbiased bases]
  \label{defi:prelim:equivalence} \hfill\\
  Let $\mathfrak{S} = \Mg{B_{0}, \ldots, B_{d}}$ and
  $\mathfrak{S}^{\prime} = \Mg{B^{\prime}_{0}, \ldots,
    B^{\prime}_{d}}$ be two complete sets of MUBs. Both sets are said
  to be equivalent if there holds
  \begin{align}
    \cB^{\prime}_j = U \cB_{\pi(j)} W_j,
  \end{align}
  for any unitary matrix $U \in M_{d} (\C)$, a permutation $\pi$ of
  $\MgN{d}$ and monomial matrices $W_j$ with $j \in \MgN{d}$.
\end{defi}

We recall that a matrix $W$ is called monomial, if it can be expressed
as $ W = D \Pi$, with $D = \mathrm{diag}(\lambda_{1}, \ldots ,
\lambda_{d})$ is diagonal and $\Pi$ is a permutation. We assume
$|\lambda_{i}| = 1$, with $i \in \MgE{d}$; i.e., $W$ is unitary.

\subsection{Pauli operators and entanglement structure}

Let us consider a prime-power dimension $d=p^m$, with $p$ prime and $m
\in \N$. On the Hilbert space $\cH = \C^{p}$, with canonical
orthonormal basis $\{ |i \rangle : i \in \F_{p}\}$ we define the Pauli
operators by
\begin{align}\label{eqn:prelim:ZX}
  Z \ket{i} = \omega^i \ket{i} \, , \qquad \qquad X \ket{i} = \ket{i
    \oplus 1},
\end{align}
where $\omega = \exp{(2 \pi \iE /d)}$ is the first $p$-th root of
unity and the addition $\oplus$ must be understood $\bmod \ p$. 

This concept can be generalized to the Hilbert space $\cH = \C^{d}$ by
introducing the $2m$-dimensional vector 
$\vec{a} = (a^z_1,\ldots,a^z_m; a^x_1,\ldots,a^x_m)^t \in \F_p^{2m}$,
and in terms of it, the set
\begin{align}
\label{eqn:bandy:ZXa}
 \ZX (\vec a) = 
\begin{cases}
     (-\iE)^{a^z_1 a^x_1} Z^{a^z_1} X^{a^x_1} \otimes\dots\otimes
     (-\iE)^{a^z_m a^x_m} Z^{a^z_m} X^{a^x_m}, &  p=2,\\
     Z^{a^z_1} X^{a^x_1} \otimes\dots\otimes Z^{a^z_m} X^{a^x_m}, &
    p \neq 2.
\end{cases}
\end{align}
The set of these Pauli operators is just the Weyl-Heisenberg group
factorized by its center.  Two Pauli operators defined by the vectors
$\vec a, \vec b \in \F_p^{2m}$ obey the symplectic commutation
relation
\begin{align}
 \ZX(\vec a) \cdot \ZX(\vec b) = 
 \omega^{(\vec a, \vec b)}  \ZX(\vec b) \cdot \ZX(\vec a),
\end{align}
with the symplectic inner product defined as
\begin{align}
 (\vec a, \vec b) = \sum\limits_{k=1}^m 
 a_k^z b_k^x -  a_k^x b_k^z \, .
\end{align}

According with the ideas in Ref.~\onlinecite{Bandyopadhyay:2002lm},
complete sets of MUBs arise straightforwardly from a
partition of the set of Pauli operators into $d+1$ subsets of $d-1$
commuting operators, called \emph{classes}. These classes
$\mathfrak{C}^{\prime}_j$ are given for $j \in \MgN{d}$ by
\begin{align}
  \label{eqn:prelim:classes}
  \mathfrak{C}^{\prime}_j = \mathfrak{C}_j \cup \{ \Eins_d \} = \{
  \ZX(\vec{a}) : \vec{a} = G_j  \vec{c}: \vec{c} \in \F_p^m \},
\end{align}
in terms of the generators $G_j$.

The Pauli operators within a class $\mathfrak{C}_j$ commute by
construction. But, we can check if the Pauli operators corresponding
to different subsystems also commute. If so, the property measured by
the corresponding basis will be a property for which this subsystem is
not entangled with the rest of the system. In principle, all possible
partitions of the number of subsystems $m \in \N$ are possible, from
\emph{completely factorizable} to \emph{fully entangled}
systems. Finally, any set of basis can be classified by a vector $\vec
n$, where each element counts the number of bases having a specific
entanglement structure. The length of $\vec n$ is given by the number
of partitions of $m$, and the first entry gives the number of
completely factorizable bases.

\subsection{Fibonacci polynomials}

Our analysis of complete sets of cyclic MUBs will rely on the
properties of \emph{Fibonacci polynomials}, which are a generalization
of the well-known \emph{Fibonacci sequence}.
\begin{defi}[Fibonacci polynomials]
  \label{defi:prelim:mubsFib} \hfill \\
  Over an arbitrary field $K$, we define the Fibonacci polynomials
  $F_{n} (x)$ ($n \in \N_{0}$) by the recursion relation
  \begin{align}
    F_{j+1}(x) = xF_{j} (x) + F_{j-1} (x),
  \end{align}
  with $F_0 = 0$ and $F_1 = 1$.
\end{defi}

In our context, we exclusively deal with the ground field $\F_{2}$ and
possibly its extensions. As the Fibonacci sequence, also the Fibonacci
polynomials can be constructed using a generator, namely,
\begin{align}
  \mathcal{A} =
  \begin{pmatrix}
    x & 1\\
    1 & 0
  \end{pmatrix}
  \equiv
  \begin{pmatrix}
    F_2(x) & F_1(x)\\
    F_1(x) & F_0(x)
  \end{pmatrix},
\end{align}
so that the powers of this generator are
\begin{align}\label{eqn:prelim:Aj}
  \mathcal{A}^j=
  \begin{pmatrix}F_{j+1}(x) & F_j(x)\\ F_j(x) &
    F_{j-1}(x)\end{pmatrix}. 
\end{align}
One can verify that
\begin{align}
\label{eq:lusa}
  F_j(x) F_{k+1}(x) + F_{j-1}(x) F_k(x) = F_{k+1}(x).
\end{align}

The Fibonacci numbers satisfy several well-known divisibility
relations; we will need their counterparts for Fibonacci polynomials
over $K = \F_{2}$.  

\begin{defi}[Fibonacci index]
\label{eqn:prelim:fibidx}\hfill\\
  The Fibonacci index of an irreducible polynomial $p \in K[x]$ is defined
  as the minimum number $n \in \N$, such that $p(x)$ divides $F_n(x)$.
\end{defi}

If the Fibonacci polynomials are defined over $\F_2$, the Fibonacci
index of any irreducible polynomial is either a divisor of $2^m-1$ or
$2^m+1$, with $m$ being the degree of $p(x)$~\cite{Goldwasser:2002pz}.

\section{Method}
\label{sec:method}

The aim of this article is to investigate a method to generate
complete sets of cyclic MUBs with different entanglement
structures. According to intrinsic structures, three different
constructions will be discussed: we start with a basic method
and present then two consecutive generalizations.

\subsection{Field-based sets}

In Ref.~\onlinecite{Seyfarth:2012sl} it has been shown how to
construct complete sets of cyclic MUBs in dimension $d=2^m$, $m \in
\N$.  This problem can accordingly be reduced to finding a suitable
symplectic stabilizer matrix $C\in M_{2m}
(\F_{2})$~\cite{Kern:2010lc}. In fact, such a matrix can be written as
\begin{align}
\label{eqn:field:C}
  C = \begin{pmatrix}
    B & \Eins_m \\
    \Eins_m & 0_m
  \end{pmatrix},
\end{align}
where $B$ is a symmetric and invertible matrix whose characteristic
polynomial has Fibonacci index $d+1$. Each solution leads to a
complete set of cyclic MUBs.

It is worth noticing that the powers of $C$ can be easily written,
according to Eq.~\eqref{eqn:prelim:Aj}, as
\begin{align}\label{eqn:field:Cj}
  C^j =
  \begin{pmatrix}
    F_{j+1}(B) & F_{j}(B) \\
    F_{j}(B) & F_{j-1}(B)
  \end{pmatrix}.
\end{align}

The generators of the cyclic sets can be written as
\begin{align}
  G_j = C^j G_0 \,,  
 \qquad \qquad G_0 =
  \begin{pmatrix}
    \Eins_m\\ 0_m
  \end{pmatrix},
\end{align}
so, using \eqref{eqn:field:Cj}, we have
\begin{align}
  G_j = \begin{pmatrix}
    F_{j+1} (B)\\
    F_{j} (B)
  \end{pmatrix}.
\end{align}

For our purposes in what follows, it will prove convenient to rewrite
these generators in a \emph{standard form} as
\begin{align}
  \label{eqn:field:barGj}
  \bar{G}_j =
  \begin{pmatrix}
    F_{j+1} (B) F_{j}^{-1} (B)  \\
    \Eins_m
  \end{pmatrix},
\end{align}
for $j \in \MgE{d}$ and $\bar G_0 \equiv G_0$. This transformation is
possible by exploiting the properties of the generation of the classes
in \eqref{eqn:prelim:classes}, where each invertible square matrix
multiplied from the left to all elements $\vec c$ is only permuting
these elements, thus multiplying any generator matrix $G_j$ from the
right by an invertible square matrix of appropriate size does not
change the generated set.

\begin{lem}[Generators are finite field representation]
  \label{lem:field:represent}\hfill\\
  The set of $m \times m$ upper submatrices of the generators $\{
  G_{j} \}$  of a
  complete set of cyclic MUBs in the standard form
  \eqref{eqn:field:barGj} is a representation of the finite field
  $\F_{2^m}$.
\end{lem}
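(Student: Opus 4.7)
The plan is to show that every upper submatrix lies in the commutative subring $\F_{2}[B]\subset M_{m}(\F_{2})$, that this subring is itself isomorphic to $\F_{2^{m}}$, and that the $d$ upper submatrices produced by $\bar G_{1},\dots,\bar G_{d}$ actually exhaust it. The field structure comes from Cayley--Hamilton: since $B$ is annihilated by its irreducible characteristic polynomial $p(x)$ of degree $m$, the evaluation map $\F_{2}[x]\to M_{m}(\F_{2})$, $x\mapsto B$, has kernel $(p(x))$ and image $\F_{2}[B]$, giving
\[
\F_{2}[B]\;\cong\;\F_{2}[x]/(p(x))\;\cong\;\F_{2^{m}}.
\]
For every $j\in\{1,\dots,d\}$ the polynomial $p(x)$ does not divide $F_{j}(x)$, because the Fibonacci index of $p$ equals $d+1$ by assumption (Definition~\ref{eqn:prelim:fibidx}); hence $F_{j}(B)\neq 0$ in the field $\F_{2}[B]$ and is therefore invertible there. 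Consequently each $\alpha_{j}:=F_{j+1}(B)\,F_{j}^{-1}(B)$ is a well-defined element of $\F_{2}[B]$, and the upper submatrix $\Eins_{m}$ coming from $\bar G_{0}$ trivially sits in $\F_{2}[B]$ as well.

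The core step is to prove that $\alpha_{1},\dots,\alpha_{d}$ are pairwise distinct. I would first establish the polynomial identity (over $\F_{2}$)
\[
F_{j+1}(x)\,F_{k}(x)+F_{j}(x)\,F_{k+1}(x)=F_{j-k}(x),
\]
either by a short induction on $|j-k|$ using the recursion $F_{n+1}=xF_{n}+F_{n-1}$, or by reading off the top-right entry of $\mathcal{A}^{j}(\mathcal{A}^{k})^{-1}=\mathcal{A}^{j-k}$ (in characteristic two $\det\mathcal{A}=1$, so $\mathcal{A}^{-1}$ has the simple symmetric form obtained from $\mathcal{A}$ by a $180^{\circ}$ rotation). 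Evaluating at $B$, the equation $\alpha_{j}=\alpha_{k}$ is equivalent to $F_{j+1}(B)F_{k}(B)+F_{j}(B)F_{k+1}(B)=0$, i.e.\ $F_{j-k}(B)=0$, which by the Fibonacci-index hypothesis forces $(d+1)\mid(j-k)$. For $j,k\in\{1,\dots,d\}$ this is possible only when $j=k$.

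Since the $d=2^{m}$ distinct elements $\alpha_{1},\dots,\alpha_{d}$ all live inside the $d$-element field $\F_{2}[B]$, they must exhaust it; the upper submatrix $\Eins_{m}$ contributed by $\bar G_{0}$ then only duplicates the unique $\alpha_{j_{0}}$ with $F_{j_{0}+1}(B)=F_{j_{0}}(B)$. The set of upper submatrices therefore coincides with $\F_{2}[B]\cong\F_{2^{m}}$ as a subset of $M_{m}(\F_{2})$ and inherits its field operations, which is precisely the advertised representation of $\F_{2^{m}}$. The main obstacle is establishing the clean Fibonacci identity above; once it is in hand, the Fibonacci-index hypothesis together with a one-line pigeonhole argument closes the proof at once.
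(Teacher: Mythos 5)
Your proof is correct, and its overall skeleton matches the paper's: identify the upper blocks with elements of $\F_2[B]\cong\F_2[x]/(p(x))\cong\F_{2^m}$ and then count $d=2^m$ distinct elements inside a $2^m$-element field. The one substantive difference lies in how distinctness is obtained. The paper simply asserts that the $\bar G_j$, $j\in\MgE{d}$, must be pairwise distinct because they generate the $d+1$ distinct bases of a complete set of cyclic MUBs — i.e.\ it imports distinctness from the hypothesis that the set is already a complete MUB set. You instead derive it purely algebraically from the Fibonacci-index condition, via the identity $F_{j+1}(x)F_k(x)+F_j(x)F_{k+1}(x)=F_{j-k}(x)$ over $\F_2$ (which is indeed the top-right entry of $\mathcal{A}^{j}(\mathcal{A}^{k})^{-1}=\mathcal{A}^{j-k}$, using $\det\mathcal{A}=1$ in characteristic two), so that $\alpha_j=\alpha_k$ forces $F_{j-k}(B)=0$. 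Note that here you do not even need the divisibility statement $(d+1)\mid(j-k)$: since $1\le|j-k|\le d-1<d+1$, minimality of the Fibonacci index already gives $p\nmid F_{|j-k|}$, hence $F_{j-k}(B)\neq 0$. Your route is slightly longer but more self-contained and more informative — it shows that the construction itself (irreducible characteristic polynomial of Fibonacci index $d+1$) produces $2^m$ distinct generators, rather than presupposing this from the MUB property — while the paper's argument is shorter because it leans on the completeness hypothesis and on the earlier results of Refs.~\onlinecite{Kern:2010lc,Seyfarth:2012sl}. Your handling of $\bar G_0$ (its upper block $\Eins_m$ merely duplicates $\alpha_{d/2}$) and of $j=d$ (where $\alpha_d=0_m$) agrees with the paper's identification of the additive and multiplicative neutral elements.
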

\begin{proof}
  For a complete set of cyclic MUBs, all the generators $\bar G_j$
  have to be distinct for $j \in \MgE{d}$. The matrices $F_{j+1} (B)
  F^{-1}_{j} (B)=p_j(B)$ are polynomials that can be taken modulo the
  characteristic polynomial of $B$. This is irreducible and has order
  $m$. As there are $d=2^m$ different generators, each element of the
  field can be represented, with $p_j(B)=0_m$ being the neutral
  element of addition and $p_{d/2}(B)=\Eins_m$ the neutral element of
  multiplication.
\end{proof}

Lemma \ref{lem:field:represent} justifies that we call these sets
\emph{field-based sets}. All sets exhibiting this structure, will have
three bases (and therefore three corresponding generators) with an
entanglement structure that is completely factorizable, namely
\begin{align}
  G_0 =
  \begin{pmatrix}
    \Eins_m \\ 0_m
  \end{pmatrix}, \qquad \bar G_{d/2} =
  \begin{pmatrix}
    \Eins_m \\ \Eins_m
  \end{pmatrix}, \qquad G_d =
  \begin{pmatrix}
    0_m \\ \Eins_m
  \end{pmatrix},
\end{align}
where the classes make the sets of all Pauli $Z$, $Y$, and $X$
operators, respectively. Therefore, to change the number of completely
factorizable bases within a set, we need to adapt the form of the
stabilizer matrix $C$ in \eqref{eqn:field:C}.  In the following, we
will show how to build complete sets of MUBs with only two and one
completely factorizable bases by two generalizations. For this, we
start from an approach for the generators in standard form and try to
calculate the corresponding stabilizer matrix.

\subsection{Group-based sets}

To reduce the number of completely factorizable bases, we break down
the field structure discussed in the last section to an additive group
structure. This excludes the neutral element of the multiplication, as
it accounts for one of the completely factorizable bases.

If, in the standard form,  the new generators are
\begin{align}
\label{eqn:group:barGj}
  \bar G_j = 
\begin{pmatrix}
F_{j+1} (B) F_{j}^{-1} (B) R \\
    \Eins_m 
\end{pmatrix} ,
\qquad 
G_0  = 
\begin{pmatrix}
\Eins_m \\ 0_m 
\end{pmatrix} ,
\end{align}
Theorem~$4.4$ of~\cite{Bandyopadhyay:2002lm} would still be
fulfilled, as long as $R \in M_{m}(\F_2)$ is invertible. This  leads to a
stabilizer matrix
\begin{align}
\label{eqn:group:C}
  C = \begin{pmatrix}
B & R\\ R^{-1} & 0_m
\end{pmatrix},
\end{align}
or, in terms of Fibonacci polynomials,
\begin{align}
  C^j =
  \begin{pmatrix}
    F_{j+1}(B) & F_j(B) R\\
    R^{-1} F_j(B) & R^{-1} F_{j-1}(B) R
  \end{pmatrix}.
\end{align}
By setting $R=\Eins_m$, it is obvious that this stabilizer is a
generalization of the one used for the field-based sets. From
Theorem~$4.4$ of Ref.~\onlinecite{Bandyopadhyay:2002lm} all matrices
$p_j(B) R$ have to be symmetric. Furthermore, an analogous argument as
for the field-based sets leads to the condition, that the
characteristic polynomial of $B$ should have Fibonacci index $d+1$. By
the following lemma we can substantially reduce one of the conditions.

\begin{lem}[Symmetrizer applies to polynomials]
  \label{lem:group:symmetry}\hfill\\
  Let $K$ denote an arbitrary field. Let us assume given the
  invertible matrix $B \in M_{m} (K)$, there exists an invertible and
  symmetric \emph{symmetrizer} $R \in M_{m} (K)$ such that $BR$ is
  symmetric. Then, all polynomials of the form $p(B) R$, with $p(B)
  \in K[x]$, are also symmetric.
\end{lem}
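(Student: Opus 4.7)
The plan is to reduce the claim to showing that $B^k R$ is symmetric for every $k \in \N_0$, and then conclude by linearity. Since $p(B) R = \sum_k c_k B^k R$ is a $K$-linear combination of the matrices $B^k R$, symmetry of each $B^k R$ immediately implies symmetry of $p(B) R$. So the real content is the monomial case.

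The key step is to extract a conjugation identity from the hypothesis. Using $R^T = R$ and the symmetry of $BR$, I would write
\begin{align}
BR = (BR)^T = R^T B^T = R B^T,
\end{align}
so, since $R$ is invertible,
\begin{align}
B^T = R^{-1} B R.
\end{align}
This says that right-multiplication by $R$ intertwines $B$ with its transpose. Taking powers of a conjugation is harmless, so $(B^T)^k = R^{-1} B^k R$, i.e.\ $(B^k)^T = R^{-1} B^k R$.

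From here the symmetry of $B^k R$ follows by a direct computation:
\begin{align}
(B^k R)^T = R^T (B^k)^T = R \cdot R^{-1} B^k R = B^k R.
\end{align}
The case $k=0$ is just the hypothesis that $R$ is symmetric, and $k=1$ is the hypothesis that $BR$ is symmetric, so the identity is consistent with the given data rather than requiring a separate induction.

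I do not expect any real obstacle: the whole argument hinges on noticing that $BR = (BR)^T$ together with $R = R^T$ forces $B$ and $B^T$ to be conjugate via $R$, after which the polynomial extension is automatic. The only minor point to be careful about is the use of invertibility of $R$ (needed to get $B^T = R^{-1}BR$ and hence well-defined conjugation of arbitrary powers); everything else is purely formal manipulation.
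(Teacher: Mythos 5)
Your proof is correct and follows essentially the same route as the paper: the core identity $RB^{t}=BR$ (which you phrase as the conjugation $B^{t}=R^{-1}BR$) combined with taking powers and then linearity. The only cosmetic difference is that you invoke the invertibility of $R$ to form the conjugation, whereas the paper telescopes the identity $R^{t}B^{t}=BR$ directly through $(B^{k}R)^{t}$ without needing $R^{-1}$; since invertibility of $R$ is hypothesized, this is immaterial.
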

\begin{proof}
  If $R$ and $BR$ are symmetric, then for any matrix $B^k R$ with $k
  \in \N$ we have
  \begin{align}
    (B^k R)^t = R^t B^t (B^{k-1})^t = B R^t B^t (B^{k-2})^t = \ldots =
    B^k R.
  \end{align}
  As sums of symmetric matrices are symmetric, this holds for all
  polynomials $p(B)$.  Considering the cases $k=0$ and $k=1$, the
  converse statement is obvious.
\end{proof}
This Lemma and the previous discussion lead to the following final set
of conditions to be fulfilled to construct a complete set of cyclic
MUBs with three or two completely factorizable bases for
invertible $R,B \in M_m(\F_2)$:
\begin{enumerate}[(i)]
\item $R$ and $B R$ are symmetric.
\item The characteristic polynomial of $B$ has Fibonacci index $d+1$.
\end{enumerate}
For each valid $B$ a corresponding \emph{symmetrizer} matrix $R$
exists which is additionally a square root of
unity~\cite{Gow:1980hx}. An algorithm can be found in
Ref.~\onlinecite{Djokovic:2003qt,Venkaiah:1988ph}.

To construct only sets which have not more than two completely
factorizable bases, $R$ has to destroy the field structure as will be
seen by the following lemma.

\begin{lem}[Destruction of field structure]
\label{lem:group:only2}\hfill\\
  Iff the symmetric matrix $R \in M_{m} (\F_2)$ does not
  equal any polynomial of $B \in M_{m} (\F_2)$, where $R$ and $B$ are
  chosen according to the conditions of group-based sets, the
  resulting complete set of cyclic MUBs has exactly two completely
  factorizable bases.
\end{lem}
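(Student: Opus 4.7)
The plan is to reduce the count of completely factorizable bases to a matrix-level condition on the upper blocks $M_j = p_j(B) R$ of the generators $\bar{G}_j$ in standard form (with $p_j(B) = F_{j+1}(B) F_j^{-1}(B)$), and then to exploit the peculiarity that over $\F_2$ the only invertible diagonal matrix is $\Eins_m$.

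First I would note that $G_0 = (\Eins_m, 0_m)^t$ and $\bar{G}_d = (0_m, \Eins_m)^t$ always contribute two completely factorizable bases: $G_0$ is the all-$Z$ basis by inspection, and $\bar{G}_d$ is the all-$X$ basis because $p_d(B) = 0$ by the minimality of the Fibonacci index $d+1$. The lemma therefore reduces to deciding, for $j \in \MgE{d-1}$, when $\bar{G}_j = (M_j, \Eins_m)^t$ is completely factorizable. The second step is to establish the criterion: $\bar{G}_j$ is completely factorizable if and only if $M_j$ is a diagonal matrix. A single-qubit Pauli on site $k$ corresponds to a vector of the shape $(a e_k, b e_k) \in \F_2^{2m}$; requiring the class $\{(M_j \vec{c}, \vec{c}) : \vec{c} \in \F_2^m\}$ to contain $m$ independent such vectors forces, for each $k$, that $\vec{c}$ is proportional to $e_k$ and $M_j e_k$ lies in $\mathrm{span}(e_k)$, which is exactly the condition that $M_j$ is diagonal.

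The third step closes the argument via $\F_2$-arithmetic: any invertible diagonal matrix over $\F_2$ equals $\Eins_m$, since each diagonal entry must be nonzero. For $j \in \MgE{d-1}$ the matrix $M_j = p_j(B) R$ is invertible (both $p_j(B)$ and $R$ are), so $M_j$ diagonal forces $M_j = \Eins_m$ and hence $R = p_j(B)^{-1} \in \F_2[B]$, a polynomial in $B$. Conversely, if $R = q(B)$ for some polynomial $q$, Lemma~\ref{lem:field:represent} guarantees that $\{p_j(B)\}_{j=1}^{d-1}$ exhausts the nonzero elements of $\F_2[B] \cong \F_{2^m}$, so there is a unique $j \in \MgE{d-1}$ with $p_j(B) = q(B)^{-1}$, producing a third completely factorizable basis. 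The main obstacle I anticipate is pinning down the factorizability criterion of step two: one must carefully rule out that a clever recombination of the columns of $\bar{G}_j$ could yield $m$ independent single-qubit generators when $M_j$ is non-diagonal. Once that is secured, the $\F_2$-arithmetic of step three forces the dichotomy almost immediately.
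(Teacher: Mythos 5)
Your argument is correct and is essentially the paper's own proof: both directions rest on the same three ingredients, namely that a generator $(M,\Eins_m)^t$ yields a completely factorizable basis exactly when its upper block $M$ is diagonal, that an invertible diagonal matrix over $\F_2$ must equal $\Eins_m$ (so a diagonal $p_j(B)R$ forces $R=p_j(B)^{-1}\in\F_2[B]$), and that for $R=q(B)$ the set is just a relabelled field-based set, which by Lemma~\ref{lem:field:represent} contains a third factorizable basis with upper block $\Eins_m$. The only difference is explicitness: the paper asserts the non-diagonality of $p_j(B)R$ tersely and leaves the diagonality criterion implicit (it surfaces only in the proof of Lemma~\ref{lem:semigroup:only1}), whereas you justify both steps, so your write-up is a more detailed rendering of the same route rather than a different proof.
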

\begin{proof}
  If $R$ is chosen to be a polynomial of $B$, say $q(B)$, the
  generators $\bar G_j$ in  \eqref{eqn:group:barGj} will read 
  \begin{align}
    \bar G_j = 
\begin{pmatrix}
p_j(B) q(B) \\
      \Eins_m 
\end{pmatrix}, 
\qquad \
G_0 = 
\begin{pmatrix}
\Eins_m \\ 0_m 
\end{pmatrix},
  \end{align}
  and thus the classes will be a permuted version of the field-based
  sets given in \eqref{eqn:field:barGj}. Conversely, as soon as $R$
  does not equal any polynomial of $B$, none of the generators will
  equal $(\Eins_m, \Eins_m)^t$, thus defining the set of Pauli $Y$
  operators. But any generator in the form $(p(B)R, \Eins_m)^t$ will
  have off-diagonal entries in $p(B)R$ for $R \neq \Eins_m\equiv B^0$
  and $p(B) \neq 0_m$ being a non-zero polynomial of $B$. Therefore,
  none of the generators can produce a completely factorizable basis.
\end{proof}

It is obvious, that all matrices $R^{\prime}$ lead to a set with the same
entanglement properties as a set with $R$, as long as $R^{\prime}= q(B) R$
holds for any non-zero polynomial $q$ of $B$, as it permutes only the set of bases.

\subsection{Semigroup-based sets}

The second  generalization is to further restrict the additive
group to an additive semigroup structure, which additionally excludes
from the set the neutral element of addition. This is achieved by
adding a symmetric matrix $A \in M_{m} (\F_2)$ to the matrices
$F_{j+1} (B) (F_{j} (B))^{-1} R$ with $j \in \MgE{d}$ in order to
preserve Theorem~$4.4$ in Ref.~\onlinecite{Bandyopadhyay:2002lm}.  We
will call these sets \emph{semigroup-based sets}.

In standard form, the generators  look like
\begin{align}
\label{eqn:semigroup:barGj}
  \bar G_j = 
\begin{pmatrix} 
F_{j+1}(B) (F_j (B))^{-1} R + A \\
    \Eins_m
\end{pmatrix} ,
\qquad 
G_0 = 
\begin{pmatrix}
\Eins_m \\ 0_m 
\end{pmatrix}, 
\end{align}
To find the corresponding stabilizer matrix $C$, we note that these generators correspond to those 
given by the product of powers of $C$ with $G_0$ and 
\begin{align}
  G_j = 
 \begin{pmatrix} 
F_{j+1}(B) + A R^{-1} F_j (B)\\
  R^{-1} F_j   (B)
\end{pmatrix}.
\end{align}
As $C G_0 = G_1$ should hold, the first column of $C$ has to be equal to
$G_1$. So we can approach $C$ as $C(x,y)$ with the two free parameters
$x,y \in M_m(\F_2)$ and fix them by applying $C G_1 = G_2$ as
\begin{align}
  \begin{pmatrix}
    B + A R^{-1} & x\\ R^{-1} & y
  \end{pmatrix}
  \begin{pmatrix}
    B + A R^{-1}\\ R^{-1}
  \end{pmatrix}
  =
  \begin{pmatrix}
    B^2 + B A R^{-1} + A R^{-1} B + AR^{-1} AR^{-1}  + x R^{-1}\\
    R^{-1} B + R^{-1} A R^{-1} + y R^{-1}
  \end{pmatrix},
\end{align}
so that
\begin{align}
  G_2 =
  \begin{pmatrix}
    B^2 + 1 + A R^{-1} B\\
    R^{-1} B
  \end{pmatrix}.
\end{align}
Solving these equations we find a new form of the stabilizer matrix,
namely
\begin{align}\label{eqn:semigroup:C}
  C=
  \begin{pmatrix}
    B + A R^{-1} & R + BA + A R^{-1} A\\
    R^{-1} & R^{-1} A
  \end{pmatrix}.
\end{align}
Setting $A=0$ leads to \eqref{eqn:group:C}, confirming that we have
found a further generalization of the group-based construction.  As a
next step, it is helpful to calculate the powers of $C$, as we can
easily check further properties of this matrix.  We find once again a
very compact form, in terms of the Fibonacci polynomials;
\begin{align}
\label{eqn:semigroup:Cj}
  C^j =
  \begin{pmatrix}
    F_{j+1}(B) + AR^{-1} F_j(B)  &  F_{j+1}(B) A + F_j(B) R + AR^{-1} [F_j(B) A + F_{j-1}(B)  R]\\
    R^{-1} F_j(B) & R^{-1} [F_j(B) A + F_{j-1}(B) R]
  \end{pmatrix}.
\end{align}
Equation~\eqref{eqn:semigroup:Cj} allows us to check that
$C^{d+1}=\Eins_{2m}$.  As long as the Fibonacci index of $B$ is $d+1$
as in the former constructions,  $C^{j} \neq \Eins_{2m}$ for any $j \in
\MgE{d}$ if one  considers only the term $R^{-1} F_j(B)$. Finally, we can
check if all powers of $C$ permute the generators as expected, namely
by checking if $C^j G^k = G^{(j+k)\;\mathrm{mod}\;(d+1)}$:
\begin{align}
  C^j G^k =
  \begin{pmatrix}
    F_{j+1}(B) F_{k+1}(B) + F_{j}(B) F_{k}(B) + A R^{-1} [  F_{j}(B)
    F_{k+1}(B) + F_{j-1}(B) F_{k}(B)  ] \\
    R^{-1} [ F_j(B) F_{k+1}(B) + F_{j-1}(B) F_k(B) ]
  \end{pmatrix}.
\end{align}
Using the general recursion relation \eqref{eq:lusa} we  find
\begin{align}
  C^j G^k =
  \begin{pmatrix}
    F_{k+j+1}(B) A R^{-1} F_{k+j}(B) \\
    R^{-1} F_{k+j}(B)
  \end{pmatrix},
\end{align}
which is the expected result. 

The conditions to be fulfilled to construct a complete set of cyclic
MUBs, using the stabilizer \eqref{eqn:semigroup:C} for
$A \in M_m( \F_2)$ and for invertible $R,B \in M_m(\F_2)$ are:
\begin{enumerate}[(i)]
\item $A$, $R$, and $B R$ are symmetric.
\item The characteristic polynomial of $B$ has Fibonacci index $d+1$.
\end{enumerate}
This generates complete sets of cyclic MUBs with three,
two and one completely factorizable bases. To have only
sets with one factorizable basis, we have to choose $R$ according to
Lemma~\ref{lem:group:only2} and add a similar condition which breaks
down the additive group structure into an additive semigroup
structure. Regarding \eqref{eqn:semigroup:barGj}, we have to
guarantee that no generator of the form $(\Eins_m, \Eins_m)^t$ or
$(0_m, \Eins_m)^t$ is produced. This additional condition is given by
the following lemma.

\begin{lem}[Destruction of additive group structure]
\label{lem:semigroup:only1}\hfill\\
  Iff the symmetric matrix $A \in M_m(\F_2)$ does not equal
  any of the matrices $p_j(B)R + D_l$, where $R$ and $B$ are
  chosen according to the conditions of semigroup-based sets, with
  $p_j(B)$ being the polynomials in $B$ and $j \in \MgE{d}$, the
  resulting complete set of cyclic MUBs has exactly one completely
  factorizable basis. The matrix $D_l$ denotes the diagonal matrix for
  which $(d_{ii}) = l_i$ holds for $l = (l_1,\ldots, l_m)$, $i \in
  \MgE{m}$ and with $l_i \in {0,1}$.
\end{lem}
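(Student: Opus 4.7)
The plan is to reduce the question ``does $\bar G_j$ yield a completely factorizable basis?'' to a simple matrix-theoretic condition on its top block, and then read off the stated biconditional.

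First I would establish the characterization that a generator in standard form $(M, \Eins_m)^t$ with $M \in M_m(\F_2)$ produces a completely factorizable basis if and only if $M$ is a diagonal matrix. To see this, identify $\F_2^{2m}$ with $(\F_2^2)^{\oplus m}$ according to the qubit split, so that the isotropic subspace spanned by the generator is $V = \{(M\vec c, \vec c) : \vec c \in \F_2^m\}$. Complete factorizability is equivalent to $V$ decomposing as a direct sum of the $m$ single-qubit symplectic planes, and a direct check shows that a vector supported only on qubit $k$ lies in $V$ for arbitrary entries iff the $k$-th column of $M$ is a multiple of $e_k$; requiring this for all $k$ forces $M$ to be diagonal. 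Conversely, when $M$ is diagonal the class $\mathfrak{C}_j$ is generated by single-qubit Pauli operators, so its joint eigenbasis is a product basis.

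With this in hand, the lemma becomes essentially bookkeeping. The generator $G_0 = (\Eins_m, 0_m)^t$ always produces the computational basis, which is factorizable, so at least one such basis is guaranteed. For $j \in \MgE{d}$, the top block of $\bar G_j$ in \eqref{eqn:semigroup:barGj} is $p_j(B) R + A$, so $\bar G_j$ yields an \emph{additional} factorizable basis exactly when $p_j(B) R + A = D_l$ for some diagonal $D_l$ with entries in $\{0,1\}$. Since $\F_2$ has characteristic two, this is equivalent to $A = p_j(B) R + D_l$. Ruling out every such pair $(j, l)$ is precisely the condition of the lemma, and both directions of the biconditional then drop out simultaneously: if the condition holds, every $\bar G_j$ for $j \in \MgE{d}$ has a non-diagonal top block and contributes a non-factorizable basis, leaving only the one from $G_0$; if it fails for some $(j, l)$, the corresponding $\bar G_j$ has top block $D_l$ and produces a second factorizable basis.

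The main obstacle is the first step: making the passage from ``completely factorizable basis'' to ``diagonal top block'' precise. One has to check carefully that the direct-sum decomposition of $V$ over the per-qubit symplectic planes is indeed equivalent to the joint eigenbasis of $\mathfrak{C}_j$ being a genuine product basis, and not merely local-Clifford equivalent to one. Once this matrix-level criterion is secured, the rest is a routine exercise in $\F_2$-arithmetic and in inspecting the explicit form \eqref{eqn:semigroup:barGj}.
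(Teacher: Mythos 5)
Your proposal is correct and follows essentially the same route as the paper: reduce complete factorizability of a standard-form generator $(M,\Eins_m)^t$ to $M$ being diagonal, then observe that over $\F_2$ the top block $p_j(B)R+A$ of $\bar G_j$ is diagonal exactly when $A=p_j(B)R+D_l$, so that $G_0$ alone survives iff all such coincidences are excluded. Your explicit justification of the criterion ``diagonal top block $\Leftrightarrow$ completely factorizable basis'' via the decomposition of the isotropic subspace into single-qubit symplectic planes is in fact more careful than the paper's proof, which simply asserts that off-diagonal entries preclude factorizability.
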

\begin{proof}
  If $A$ is chosen to be a polynomial of $B$, say $q(B)$, multiplied
  by $R$, the class for which $q(B) = p_j(B)$ holds will produce the
  set of Pauli $X$ operators, therefore this possibility has to be
  excluded. Furthermore, if instead of $0_m$, a matrix $D_l$ with only
  diagonal entries appears in the $m \times m$ upper submatrix of the
  generator which would then equal $(D_l, \Eins_m)^t$, for all cases
  of $D_l$ a completely factorizable basis would be created. As long
  as this is not the case, the off-diagonal parts of the $m \times m$
  upper submatrix of the generators in standard form will lead to
  bases which are not completely factorizable.
\end{proof}

\section{Completeness}
\label{sec:cmp}

Finally, we have to guarantee that our method produces MUBs with all
the possible entanglement structures within the same \emph{equivalence
  class} (in the sense of Definition \ref{defi:prelim:equivalence})
and which are allowed by the scheme provided by Bandyopadhyay \emph{et
al}~\cite{Bandyopadhyay:2002lm}. Of course, this does not include
sets which have no completely factorizable bases.

The definition of equivalence can be restated in terms of the class
generators $G_j$ with $j \in \MgN{d}$, an arbitrary symplectic matrix
$f \in \mathrm{Sp}_{2m} (\F_2 )$ and a permutation $Q$ of the elements
within a class and a permutation $\pi_{j}$ of the index set of the set
of bases.~\cite{Seyfarth:2012sl} By \eqref{eqn:prelim:classes}, the
permutation $Q$ is of no relevance. As long, as we consider
only the set of generators $G_j$ and not their ordering, also the
permutation $\pi_{j}$ becomes irrelevant. So, we are left with the
symplectic matrix $f$ that, in general, would be given by
\begin{align}
  f = \begin{pmatrix}s&t\\u&v\end{pmatrix},
\end{align}
where $s,t,u,v \in M_m(\F_2)$. We start with a lemma:

\begin{lem}[Equivalence of different sets]
\label{lem:cmp:equivalence}\hfill\\
  The semigroup-based sets with the stabilizer matrix given by
\eqref{eqn:semigroup:C} are equivalent to the field-based
  sets with the stabilizer matrix given by 
  \eqref{eqn:field:C}.
\end{lem}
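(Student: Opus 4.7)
The plan is to exhibit an explicit symplectic matrix $f \in \mathrm{Sp}_{2m}(\F_{2})$ that, when applied to the semigroup-based generators \eqref{eqn:semigroup:barGj}, yields (up to class-internal right multiplication by invertibles and a reordering of the index $j$) a collection of generators of the field-based form \eqref{eqn:field:barGj} for some new matrix $B'$. By the discussion immediately preceding the lemma, in which the basis permutation $\pi$ and the intra-class permutation $Q$ were argued to be irrelevant for the equivalence of \emph{sets} of generators, producing such an $f$ suffices. I would split the construction of $f = f_{2} f_{1}$ into two steps mirroring the two generalizations of the field-based construction.

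First, to absorb the additive shift $A$, apply
\begin{align*}
  f_{1} = \begin{pmatrix} \Eins_{m} & A \\ 0_{m} & \Eins_{m} \end{pmatrix}.
\end{align*}
Symmetry of $A$, one of the semigroup conditions, is precisely the constraint that makes $f_{1}$ symplectic (the off-diagonal symplectic relations reduce to $A^{t}=A$, the diagonal one is automatic). A one-line computation shows $f_{1}\bar{G}_{j}^{\text{semi}} = \bar{G}_{j}^{\text{group}}(B,R)$, reducing the claim to the equivalence of group-based and field-based sets.

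Second, to absorb the symmetrizer $R$, I would build $f_{2}$ using the established properties $R=R^{t}$, $BR=(BR)^{t}$, and, by Gow's theorem~\cite{Gow:1980hx}, $R^{2}=\Eins_{m}$. The goal is that conjugation by $f_{2}$ turns the group-based stabilizer $\bigl(\begin{smallmatrix} B & R \\ R & 0_m \end{smallmatrix}\bigr)$ into a field-based stabilizer $\bigl(\begin{smallmatrix} B' & \Eins_m \\ \Eins_m & 0_m \end{smallmatrix}\bigr)$ with $B'$ similar to $B$, so that the characteristic polynomial and hence the Fibonacci index carry over automatically; symmetry and invertibility of $B'$ should fall out of the symmetrizer identities combined with the symplectic constraints on $f_{2}$. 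The verification that powers of the new stabilizer reproduce the Fibonacci-polynomial form of \eqref{eqn:semigroup:Cj} with $A=0$ and $R=\Eins_{m}$ will use the identity \eqref{eq:lusa}.

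The main obstacle is Step 2. Over $\F_{2}$ an invertible symmetric matrix need not be congruent to the identity, so the naive block-diagonal ansatz $f_{2} = \mathrm{diag}(s,(s^{t})^{-1})$ with $sRs^{t} = \Eins_{m}$ is not in general solvable; for example, off-diagonal symmetric $R$ cannot be congruenced to $\Eins_m$ at all. A workable $f_{2}$ must instead mix the upper and lower halves of the generator and balance $R$ against the symmetry of $BR$. The correct $B'$ is expected to be of the form $B' = sBs^{-1}$ for the chosen $s$, which preserves the Fibonacci index; once $f_{2}$ is found, checking that the collection $\{f_{2}f_{1}\bar{G}_{j}^{\text{semi}}\}$ and the generators $\{\bar{G}_{j}^{\text{field}}(B')\}$ coincide as sets (relying on the fact that polynomials of $B$ permute the nonzero elements of the field $\F_{2^{m}}$ realized on $B$) is then a routine calculation.
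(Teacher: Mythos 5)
Your Step 1 is fine: over $\F_2$ the map $f_1=\bigl(\begin{smallmatrix}\Eins_m & A\\ 0_m & \Eins_m\end{smallmatrix}\bigr)$ is symplectic precisely when $A=A^t$, and it strips the additive shift from \eqref{eqn:semigroup:barGj}. But the lemma is not proved, because Step 2 -- the elimination of the symmetrizer $R$ -- is exactly the nontrivial content, and you do not construct $f_2$; you only describe constraints it ``must'' satisfy and defer the rest (``once $f_2$ is found \dots is then a routine calculation''). Worse, the obstruction you correctly identify (an invertible symmetric $R$ over $\F_2$ with zero diagonal is alternating and hence not of the form $s^t s$, so no congruence can carry it to $\Eins_m$) shows that your intended shape of $f_2$ cannot exist in general as a block-diagonal congruence, and you give no concrete replacement that ``mixes the upper and lower halves''. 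As it stands the argument is a plan with its central step missing.

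For comparison, the paper runs the argument in the opposite direction and thereby never has to solve for $s$ given $R$: it applies a single block-upper-triangular symplectic matrix $f=\bigl(\begin{smallmatrix}s & t\\ 0_m & (s^t)^{-1}\end{smallmatrix}\bigr)$ to the \emph{field-based} generators \eqref{eqn:field:barGj} and, after renormalizing to standard form, reads off $f\bar G_j = \bigl(\begin{smallmatrix} p_j(B')R + A\\ \Eins_m\end{smallmatrix}\bigr)$ with $B'=sBs^{-1}$, $R=s^t s$ and $A=t s^t$; setting $t=0_m$ gives the group-based case. In other words, the paper exhibits the equivalence for the semigroup data $(B',R,A)$ parametrized by $(s,t)$ (the completeness theorem then reduces an arbitrary symplectic transformation to this triangular form), rather than starting from an arbitrary admissible $(R,A)$ and trying to undo it -- which is precisely where your attempt stalls. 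If you want to keep your direction, you would have to show that every admissible $R$ arising in the construction is congruent to the identity (or handle the alternating case separately), and then actually write down $f_2$; without that, the claimed equivalence is unsupported.
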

\begin{proof}
  For a certain choice of the symplectic matrix  $f$ we should be able to
  convert field-based sets into semigroup-based sets. This can be
  accomplished by multiplying the generators  by
  \begin{align}
\label{eq:fuf}
    f = \begin{pmatrix}
   s & t\\
   0_m& v
  \end{pmatrix},
  \end{align}
  where $s,t,v \in M_m (\F_2)$ and  gives
  \begin{align}
    f G_0 = 
 \begin{pmatrix} 
s\\ 0_m
 \end{pmatrix},
\qquad 
 f \bar G_j
    = \begin{pmatrix} 
  s p_j(B) s^{-1} s^t s + t s^t\\
      \Eins_m
\end{pmatrix},
  \end{align}
where we can recognize $s p_j(B) s^{-1}$ as
  $p_j(B')$, $s^t s=R$ and $ t s^t = A$ if $s$ and $t$ are chosen
  accordingly. If we set $t=0_m$, also the group-based sets with the
  stabilizer matrix given by \eqref{eqn:group:C} belong to
  the same equivalence class.
\end{proof}

Finally, we confirm  the completeness of our scheme.

\begin{thm}[Completeness of the construction scheme]
\label{thm:cmp:completeness}\hfill\\
  The stabilizer matrix \eqref{eqn:semigroup:C} leads to complete sets
  of MUBs with all the possible entanglement structures. Moreover,
  they are equivalent to the sets generated via  the stabilizer matrix
  \eqref{eqn:field:C}. 
\end{thm}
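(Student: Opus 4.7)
The plan is to treat the two assertions of Theorem~\ref{thm:cmp:completeness} separately, reusing machinery already established in the paper. The \emph{moreover} clause is essentially a restatement of Lemma~\ref{lem:cmp:equivalence}: that lemma exhibits an upper-triangular symplectic $f$ sending the field-based generators to semigroup-based ones with $R = s^{t} s$ and $A = t s^{t}$. So I would invoke it to dispatch the equivalence claim, leaving the completeness statement as the real work.

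For completeness, the key idea is to parameterize the equivalence class of a field-based set and show that the semigroup ansatz exhausts it. After restricting Definition~\ref{defi:prelim:equivalence} to its nontrivial symplectic part (permutations of bases and monomial factors preserve the entanglement structure class by class), I would analyze which $f \in \mathrm{Sp}_{2m}(\F_{2})$ can appear. Written as $f = \begin{pmatrix} s & t \\ 0_{m} & v \end{pmatrix}$, symplecticity forces $v = (s^{-1})^{t}$ and requires $t s^{t}$ to be symmetric, precisely matching condition (i) of the semigroup scheme. As $s$ ranges over $\GL{m}{\F_{2}}$, the map $s \mapsto s^{t} s$ covers every invertible symmetric $R$ (this surjectivity is linked to the existence of a symmetrizer, as exploited in Lemma~\ref{lem:group:symmetry}); and for each such $s$, the choice $t = A (s^{t})^{-1}$ realizes every symmetric $A$. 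Hence every upper-triangular symplectic transformation of a field-based set is captured by some admissible pair $(R,A)$.

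Next, I would read off the entanglement tier from Lemmas~\ref{lem:group:only2} and~\ref{lem:semigroup:only1}. The choice $A = 0_{m}$, $R = \Eins_{m}$ returns the field case with three completely factorizable bases; keeping $A = 0_{m}$ but choosing $R$ not a polynomial in $B$ yields exactly two; and choosing $A$ outside the forbidden set $\{ p_{j}(B) R + D_{l} \}$ yields exactly one. Because the Bandyopadhyay partition admits at most three factorizable classes (those of the Pauli $Z$, $X$, and $Y$ operators), and sets with none are explicitly excluded from the statement, these three tiers are exhaustive. Finer entanglement vectors $\vec{n}$ within each tier are governed by the off-diagonal patterns of the matrices $p_{j}(B) R + A$, which vary freely with $(B,R,A)$, so no intermediate structure is missed.

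The principal obstacle I expect is justifying that restricting the equivalence $f$ to upper-triangular form loses no generality, since in principle a general symplectic equivalence might produce structures unreachable by the $u = 0_{m}$ ansatz. I would handle this by arguing that, up to the irrelevant reindexing $\pi$, one can always align the $f$-image of $G_{0} = (\Eins_{m}, 0_{m})^{t}$ with a generator whose lower block vanishes, which forces the lower-left block $u$ of $f$ to be zero without loss of entanglement information. Once that normalization is in place, the surjectivity calculation above closes the proof.
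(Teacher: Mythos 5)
Your proposal follows essentially the same route as the paper's proof: the \emph{moreover} clause is dispatched by Lemma~\ref{lem:cmp:equivalence}, and completeness is obtained by taking an arbitrary $f \in \mathrm{Sp}_{2m}(\F_2)$, normalizing it (via a further symplectic transformation that resets one generator to $G_0$, exactly the paper's $f^{\prime}$ step) to the block-upper-triangular form \eqref{eq:fuf} with $v=(s^t)^{-1}$, and then recognizing the transformed generators as the semigroup form with $R$ and $A$ built from $s$ and $t$. The one aside that deviates, the claim that $s \mapsto s^t s$ is surjective onto all invertible symmetric matrices over $\F_2$, is actually false for alternating $R$ (e.g.\ the $2\times 2$ matrix with zero diagonal), but it is not needed for the direction your argument uses—each upper-triangular symplectic $f$ automatically yields an admissible pair $(R,A)$—and the paper itself hides the analogous choice inside Lemma~\ref{lem:cmp:equivalence}, so this does not affect the proof.
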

\begin{proof}
  By Lemma \ref{lem:cmp:equivalence} the diffferent sets are
  equivalent. To show that the construction captures all possible sets
  within this equivalence class and  according to Theorem~$4.4$ of
  Ref.~\onlinecite{Bandyopadhyay:2002lm}, we have to be able to apply
  an arbitrary symplectic matrix $f \in \mathrm{Sp}_{2m} (\F_2)$ to the
  generators  \eqref{eqn:semigroup:barGj}, so we get
  \begin{align}
    f G_0 = 
\begin{pmatrix} 
s\\ u
\end{pmatrix},
\qquad 
    f \bar G_j  = 
 \begin{pmatrix} 
s p_j(B) R + s A + t\\ 
  u p_j(B) R +    uA + v 
\end{pmatrix}
  \end{align}
  for
  \begin{align}
    f = 
\begin{pmatrix}
s & t\\
u & v
\end{pmatrix} ,
  \end{align}
  with $s,t,u,v \in M_m (\F_2)$.  We still are free to reset any of the
  generators to the $G_0$ by
  \begin{align}
    f^{\prime} = 
\begin{pmatrix}
s^{\prime}&t^{\prime}\\
u^{\prime}&v^{\prime}
\end{pmatrix},
  \end{align}
  which gives
  \begin{align}
    f^{\prime} f G_0 = 
\begin{pmatrix} 
s^{\prime}s+t^{\prime}u\\ 
u^{\prime}s + v^{\prime}u
\end{pmatrix}
  \end{align}
  and leads to the condition that $u's + v'u$. Applying $f^{\prime}$ to $f$
  leads then to a matrix where the lower left block equals $0_m$, thus
  we can chose a symplectic matrix in the form (\ref{eq:fuf}), 
  where $v\equiv (s^t)^{-1}$ as long as $f$ is symplectic. For the
  generators of the semigroup-based sets we get then the same result we
  found in Lemma \ref{lem:cmp:equivalence}.
\end{proof}

\section{Conclusions}
\label{sec:conclusion}

We have shown how to generate complete sets of cyclic MUBs with
different entanglement structures by taking advantage of the
properties of the Fibonacci polynomials. Two levels of generalization
raise the basic field-based sets with three completely factorizable
bases to the group-based sets with two completely factorizable bases
and finally to the semigroup-based sets with only one completely
factorizable basis.

Finally, we have proven that in this way we can generate all the
possible entanglement structures. Of course, sets with zero
factorizable bases are out of scope, as the standard basis cannot be
part of the set. In principle, this should be realizable by
generalizing again the discussed method.

\begin{acknowledgments}
  Financial support from the EU FP7 (Grant Q-ESSENCE) and the Spanish
  DGI (Grant FIS2011-26786) is gratefully acknowledged.
\end{acknowledgments}

\end{document}